\documentclass[12pt]{article}

\usepackage{enumitem}
\usepackage{comment}
\usepackage{amsthm}
\usepackage{amsmath}
\usepackage{amssymb}
\usepackage{amsfonts}
\usepackage{graphicx}
\usepackage{mathrsfs}
\usepackage{bm}
\usepackage{xspace}
\usepackage{subfig}

\newtheorem{theorem}{Theorem}[section]

\newtheorem{corollary}[theorem]{Corollary}
\newtheorem{openproblem}[theorem]{Open Problem}

\newcommand{\Pos}{\ensuremath{\mathtt{1}}\xspace}
\newcommand{\Neg}{\ensuremath{\mathtt{0}}\xspace}
\newcommand{\Non}{\ensuremath{\mathtt{\diamond}}\xspace}
\newcommand{\Rev}[1]{\ensuremath{\widetilde{#1}}}

\author{Dhananjay Ipparthi\thanks{IRIDIA, CoDE, Universit\'{e} libre de Bruxelles, Brussels, Belgium. \protect{\texttt{dhananjay.ipparthi@ulb.ac.be}}}
\and
Massimo Mastrangeli\thanks{Max Planck Institute for Intelligent Systems, Stuttgart, Germany. \protect{\texttt{mastrangeli@is.mpg.de}}}
\and
Andrew Winslow\thanks{D\'{e}partement d'Informatique, Universit\'{e} libre de Bruxelles, Brussels, Belgium. \protect{\texttt{awinslow@ulb.ac.be}}}
}

\title{Dipole Codes Attractively\\Encode Glue Functions}
\date{}

\begin{document}

\maketitle

\begin{abstract}
\emph{Dipole words} are sequences of magnetic dipoles, in which alike elements repel and opposite elements attract.
Magnetic dipoles contrast with more general sets of bonding types, called \emph{glues}, in which pairwise bonding strength is specified by a \emph{glue function}. 
We prove that every glue function~$g$ has a set of dipole words, called a \emph{dipole code}, that \emph{attractively encodes}~$g$: the pairwise attractions (positive or non-positive bond strength) between the words are identical to those of~$g$.
Moreover, we give such word sets of asymptotically optimal length.
Similar results are obtained for a commonly used subclass of glue functions.
\end{abstract}

\section{Introduction}
\label{sec:introduction}

Self-assembly is the autonomous organization of components into structures without supervision~\cite{Whitesides-2002}.
Here we consider controlling self-assembly using fixed arrangements of magnetic dipoles, specifically leveraging their ability to attract and repel according to their spatial configurations.
Some previous microscale~\cite{Hosokawa-1996,Bowden-2001,Clark-2001} and mesoscale~\cite{Bowden-1997,Bowden-2000,Rothemund-2000} self-assembling systems have used the capillary effects of surface tension as alternative bonding mechanisms.
More recently, molecular recognition of mesoscale components via surface chemistries has also been used~\cite{Cheng-2014,Harada-2010,Xiao-2015}.
However, magnets are among the most common sources of interaction force in micro- and mesoscale assembly, and they are used in both \emph{active} components that change bonding behavior~\cite{Bishop-2005,White-2004,White-2005} and in \emph{passive} components whose behavior is fixed~\cite{Bhalla-2010,Breivik-2001,Hosokawa-1994,Mermoud-2012}.

A primary limitation of dipole-based bonding is the limited number of interactions between dipoles: alike poles repel, while opposite poles attract. 
Frameworks by Bhalla et al.~\cite{Bhalla-2007,Bhalla-2010} and Majumder and Reif~\cite{Majumder-2008} describe an approach for obtaining more complex behaviors by arranging sequences of dipoles along the boundaries of components.
These \emph{dipole codes} are used to obtain many distinct bonding sites, called \emph{glues}, that interact only with a unique complementary glue.
Bhalla et al.~\cite{Bhalla-2007,Bhalla-2010,Bhalla-2012b} have demonstrated that dipole codes also work experimentally.

The use of \emph{DNA codes} is ubiquitous in DNA-based nanoscale self-assembly, where sequences of repeating nucleotides from the alphabet $\{\mathtt{A}, \mathtt{T}, \mathtt{C}, \mathtt{G}\}$ have been used experimentally to form dozens~\cite{Adleman-1994,Evans-2014,Rothemund-2004} or even hundreds~\cite{Ke-2012,Wei-2012} of glues.
Their theoretical study also is extensive, as seen in several surveys~\cite{Brenneman-2002,Garzon-2014,Mauri-2004,Montemanni-2015,Sager-2006}.
However, there are several practical aspects that differentiate the design of DNA codes from dipole codes.

For instance, the elasticity of DNA requires that codes must disallow multiple portions of a single code to bond~\cite{Kari-2005,Milenkovic-2006}, while chemistry requires that the codes must have balanced occurrences of letters $\mathtt{A}$, $\mathtt{T}$ and $\mathtt{G}$, $\mathtt{C}$~\cite{King-2003,Gaborit-2005,Montemanni-2014}. 
On the other hand, dipole codes have only a single pair of bonding letters, and inconsistency of mesoscale mixing allows codes to bond with even a single dipole pair~\cite{Bhalla-2010}.

Theoretical work on DNA-based systems has also demonstrated that the addition of a repelling force to systems with many glues increases computational power~\cite{Patitz-2011,Reif-2011} and efficiency~\cite{Doty-2013,Schweller-2013}.
Thus the construction of large numbers of glues with magnetic dipole sequences gives access to yet more techniques for controlling assembly.

\paragraph{Our contribution}

The frameworks of both Bhalla et al.~\cite{Bhalla-2007,Bhalla-2010} and Majumder and Reif~\cite{Majumder-2008} have several drawbacks.
First, neither formalizes how to obtain sets of magnetic dipole sequences that encode the behavior of a desired number of glues, nor glues with complex pairwise interactions.
Second, their sequences require precise control of the system's \emph{temperature}, the amount of force necessary for a bond to be irreversible, as well as additional component geometry.
Both are needed to prevent undesired bonds caused by dipole sequence pairs that only partially match or are misaligned.\footnote{Section 5.2 and Figure~11 of~\cite{Bhalla-2007} and Section~4 of~\cite{Majumder-2008} discuss these difficulties.}
Our work addresses both drawbacks, giving magnetic dipole sequences that encode pairwise bonding behaviors of arbitrarily many glues at fixed temperature and without misalignment.

Section~\ref{sec:definitions} begins by giving a formal model of magnetic dipole sequences, called \emph{dipole words}, and the net forces between them.
The pairwise bonding behaviors of a set of glues are defined by a \emph{glue function}, and we define what it means for a set of dipole words to \emph{encode} a glue function and thus be a \emph{dipole code}.
These definitions allow for the possibility of misaligned or weak bonds. 
They require that the dipole words work at fixed temperature (a pair can bond if the net attractive force is positive) and that all misaligned bonds are non-attractive.

Section~\ref{sec:encode-warmup} contains a ``warmup'' dipole word set construction that encodes \emph{canonical} glue functions, where each glue only bonds to itself or to a unique complementary glue. 
For any such function over $k$ glues, this construction gives a dipole code of length $O(k)$ that encodes it.

Section~\ref{sec:encode-general} improves this construction by extending it to all glue functions, allowing for \emph{flexible glues} (see~\cite{Aggarwal-2005}) that bond to many others.
Section~\ref{sec:encode-canonical} improves on length of dipole codes for canonical glue functions only, obtaining length-$O(\log{k})$ codes.
For both of these results, we also prove that the word lengths of the second and third constructions are asymptotically optimal.
Finally, Section~\ref{sec:conclusion} poses several remaining open problems.

\section{Definitions}
\label{sec:definitions}

\paragraph{Letters and words}
A \emph{letter} is a symbol $x$ in the \emph{alphabet} $\Sigma = \{\Neg, \Pos, \Non\}$. 
A \emph{dipole word} or simply a \emph{word} is a sequence of letters, and the \emph{length} of a word $W$, denoted $|W|$, is the number of letters in $W$.
For an integer $i \in \{1, 2, \dots, n\}$, $W[i]$ refers to the $i$th letter of $W$ and $W[-i]$ refers to the $i$th from the last letter of $W$.
The \emph{reverse} of a word $W$, written $\Rev{W}$, is the letters of $W$ in reverse order.

A \emph{subword} of $W$ is a contiguous sequence of letters in $W$. 
For integers $1 \leq i, j \leq |W|$ with $i \leq j$, $W[i..j]$ denotes the subword of $W$ from $W[i]$ to $W[j]$, inclusive.
As shorthand, $W[-i..] = W[-i..|W|]$ and $W[..j] = W[1..j]$.

\paragraph{Forces}
The \emph{force} of a pair of letters $x, y$ is defined by function 
\begin{displaymath}
f(x, y) = \left\{
\begin{array}{ll}
1 & : \{x, y\} = \{\Neg, \Pos \}\\
-1 & : \{x, y\} \in \{ \{\Neg \}, \{ \Pos \}\} \\
0 & : {\rm otherwise}
\end{array}
\right.
\end{displaymath}
and a pair of letters $x, y$ is \emph{attracted} or \emph{repelled} provided $f(x, y) = 1$ or $f(x, y) = -1$, respectively.
Similarly, for words $X$, $Y$ with $|X| = |Y|$, the \emph{force} of the pair $X, Y$, denoted $f(X, Y)$, is defined as 
$$f(X, Y) = \Sigma_{i = 1}^{|X|}f(X[i], Y[i])$$
and the pair $X, Y$ is \emph{attracted} provided $f(X, Y) > 0$.

\begin{figure}[ht]
\centering
\includegraphics[scale=1.2]{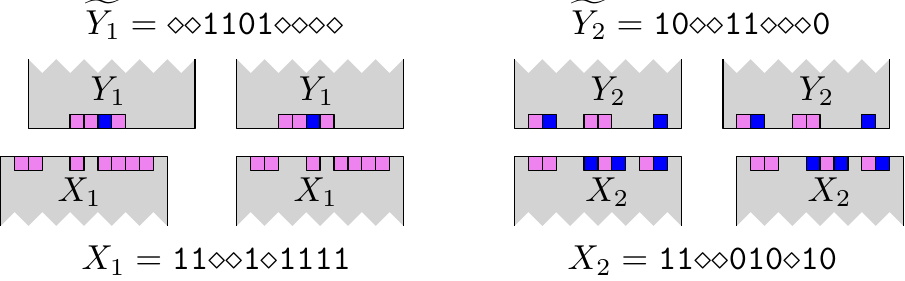}
\caption{
Left: the pair $X_1, \Rev{Y_1}$ is attracted, since $f(X_1, \Rev{Y_1}) = 1 > 0$, and aligned. 
Right: the pair $X_2, \Rev{Y_2}$ is not attracted, since $f(X_2, \Rev{Y_2}) = -3 + 2 = -1 \not > 0$, and not aligned, since $f(X_2[..-2], \Rev{Y_2}[2..]) = 3 > 0$.}
\label{fig:figure}
\end{figure}

\paragraph{Alignment}
For a pair $X, Y$ that corresponds to sequences of magnetic dipoles in self-assembling components, it may be the case that $X[i..]$ and $Y[..-i]$ are attracted, possibly causing unwanted and ``misaligned'' attachments.
A pair $X, Y$ with $|X| = |Y|$ are \emph{aligned} provided that for $A \in \{Y, \Rev{Y}\}$, $f(X, Y) \leq 0$ and:
$${\rm max}( \{  f(X[..i], A[-i..]), f(X[-i..], A[..i]) : i \in \mathbb{N}_{|X|-1} \}) \leq 0$$
The inclusion of both of $Y$ and $\Rev{Y}$ is to account for the possibility of two units attaching after being rotated ($A = \Rev{Y}$) or reflected ($A = Y$). 
The inclusion of $f(X, Y) \leq 0$ is to capture the fact that self-assembly systems often allow rotation without reflection, but not the converse. 

\paragraph{Glue functions}
Let $\mathbb{N}_k$ denote $\{1, 2, \dots, k\}$.
A \emph{$k$-glue function} is a function $g : \mathbb{N}_k^2 \rightarrow \mathbb{Z}$ such that $g(i, j) = g(j, i)$.
The \emph{bond graph} of a glue function $g$ is the graph $(V, E) = (\mathbb{N}_k, \{(i, j) : g(i, j) > 0\})$.
The following $k$-glue function is the \emph{canonical unsigned $k$-glue function}:
\begin{displaymath}
g(i, j) = \left\{
\begin{array}{ll}
1 & : i = j\\
0 & : {\rm otherwise}
\end{array}
\right.
\end{displaymath}
Similarly, the following $2k$-glue function is the \emph{canonical signed $2k$-glue function}:
\begin{displaymath}
g(i, j) = \left\{
\begin{array}{ll}
1 & : \{i, j\} = \{2a-1, 2a\}, a \in \mathbb{N}_k\\
0 & : {\rm otherwise}
\end{array}
\right.
\end{displaymath}

\paragraph{Encoding}
A set of common-length words $\mathcal{W} = W_1, W_2, \dots, W_k$ \emph{encodes} a $k$-glue function $g$ provided that for every $i, j \in \mathbb{N}_k$, $W_i, W_j$ are aligned and $f(W_i, \Rev{W_j}) = g(i, j)$. 
Similarly, $\mathcal{W}$ \emph{attractively encodes $g$} provided that for all $i, j$, $f(W_i, \Rev{W_j}) > 0$ if and only if $g(i, j) > 0$. 
Such an encoding set is called a \emph{dipole code} and the length of the words of the set is the code's \emph{length}.

\section{A First Encoding Result}
\label{sec:encode-warmup}

\begin{theorem}
\label{thm:encode-warmup}
For any $k$, there exists a length-$(6k+14)$ dipole code that attractively encodes the canonical signed $2k$-glue function.
\end{theorem}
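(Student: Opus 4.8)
The plan is to exhibit an explicit family of words of length $6k+14$ and then verify the two halves of ``attractively encodes'' in turn: that the intended overlay $f(W_i,\Rev{W_j})$ has the correct sign for every pair, and that every pair $W_i,W_j$ is aligned.

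\textbf{The construction.} I would give every word the shape $W=C\cdot D\cdot \Rev{\overline{C}}$, where $\overline{\,\cdot\,}$ swaps $\Pos\leftrightarrow\Neg$, the frame $C$ is a fixed binary (no-$\Non$) word of length $3$, and the body $D$ is a binary word of length $6k+8$. Pair the glues by making $W_{2a}$ the reverse--complement of $W_{2a-1}$; since $\Rev{\overline{C\,D\,\Rev{\overline C}}}=C\,\Rev{\overline D}\,\Rev{\overline C}$, this keeps the frame structure and makes the body of $W_{2a}$ equal to $\Rev{\overline{D_a}}$. The bodies $D_1,\dots,D_k$ are drawn from an explicit equidistant binary code (e.g.\ built from the rows of a Hadamard matrix of order $\Theta(k)$, but with each coordinate ``blown up'' into a short block of equal letters, and a short reverse--complemented copy appended to tame the interaction with reversal): the constants are tuned so that $d_H(D_a,D_b)=3k+7$ and $d_H(D_a,\Rev{D_b})=3k+1$ for all $a\neq b$, while $d_H(D_a,\Rev{D_a})\le 3k+1$. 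The block blow-up is the reason the body is $\Theta(k)$ rather than $O(\log k)$; its purpose is to kill the shifted overlaps analyzed below, and the later lower bound of the paper confirms $6k$ is the right leading constant.

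\textbf{Intended overlays.} Here one just computes. Because partners are reverse--complements, in the overlay of $W_{2a-1}$ with $\Rev{W_{2a}}$ every letter meets its own complement, so $f(W_{2a-1},\Rev{W_{2a}})$ equals the full length $6k+14>0$. For every non-partner pair the frame contributes the same fixed amount to the overlay and the body contributes an affine function of $d_H$ of the two bodies (or of one body against the reverse of the other); the distances above were chosen precisely so that, in each case --- odd--odd, even--even, odd--even with $a\neq b$, and $i=j$ --- this total is exactly $0$. Hence $f(W_i,\Rev{W_j})>0$ if and only if $\{i,j\}=\{2a-1,2a\}$, i.e.\ if and only if $g(i,j)>0$.

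\textbf{Alignment --- the crux.} It remains to show that for every pair $i,j$ and every $A\in\{W_j,\Rev{W_j}\}$ we have $f(W_i,W_j)\le 0$ and $f(W_i[..\ell],A[-\ell..])\le 0$, $f(W_i[-\ell..],A[..\ell])\le 0$ for all $\ell$. The reflected overlay $f(W_i,W_j)$ again reduces to the same $d_H$ expressions (with the frame now meeting itself, contributing $-3$ at each end) and comes out $\le 0$ in every case. For the shifted overlays: the smallest $O(1)$ shifts are where the two frames still overlap, and there the binary frame contributes enough self-repulsion to swamp the bounded boundary terms; every larger shift reduces to a \emph{partial} cross-correlation of $D_a$ against a shifted copy of one of $D_b$, $\Rev{D_b}$, $\overline{D_b}$, $\Rev{\overline{D_b}}$, plus $O(1)$ junk from the frame fragments hanging past the body. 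Forcing every such partial correlation to be non-positive --- simultaneously for all shifts, all four transformed versions, and all pairs --- is the technical heart of the proof and the main obstacle; it is exactly what the block blow-up is engineered to buy, and it is what pins the length at $6k+14$ (body $6k+8$, plus two length-$3$ frames).
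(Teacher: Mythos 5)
There is a genuine gap, and in fact the construction as specified provably fails the alignment requirement. Alignment of a pair $W_i, W_j$ requires $f(W_i[..1], A[-1..]) \leq 0$ for $A = W_j$ as well as $A = \Rev{W_j}$, i.e., the first letter of $W_i$ and the last letter of $W_j$ must not attract. In your shape $W = C\, D\, \Rev{\overline{C}}$ every word begins with $C[1]$ and ends with $\overline{C[1]}$; since $C$ is binary (no \Non), these are \Pos and \Neg in some order, so $f(W_i[..1], W_j[-1..]) = 1 > 0$ for \emph{every} pair, and alignment fails already at shift $l=1$. More broadly, the part you defer --- making every partial cross-correlation non-positive simultaneously for all shifts, all four transforms, and all pairs of fully binary words --- is not a routine consequence of a "block blow-up"; for all-binary words every overlapped position contributes $\pm 1$, so you are demanding Barker-like aperiodic correlation properties for an entire code, and you give no construction or argument. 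There are also arithmetic obstructions to the asserted body code: for three binary words the pairwise Hamming distances sum to an even number, so an equidistant code with $d_H(D_a,D_b) = 3k+7$ for all $a \neq b$ cannot have three or more codewords when $k$ is even, and with $d = 3k+7 > (6k+8)/2$ the Plotkin bound already caps the code size near $k$, leaving no slack for the additional constraints on $d_H(D_a, \Rev{D_b})$.

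The paper's proof avoids all of this by making heavy use of the neutral letter \Non rather than fully binary words. Its words are $W_{2a-1} = \Non^{2k+5} \Pos \Pos (\Non \Pos)^{a-1} (\Pos \Non) (\Non \Pos)^{k-a} \Pos \Non^{2k+6}$ and $W_{2a} = \Pos^{2k+6} \Neg (\Non\Non)^{k-a} (\Pos\Non) (\Non\Non)^{a-1} \Non\Non \Pos^{2k+5}$, so each word contains at most one \Neg letter. Any two words therefore admit at most two attracted letter pairs in total, while the long runs of \Pos (in even-indexed words) and the guaranteed density of \Pos letters elsewhere force at least that many repelled pairs in every shifted or reflected overlay; alignment then follows from a two-line counting argument rather than a correlation analysis. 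The intended overlays are arranged to have force exactly $1$ for partners and $0$ otherwise via a unary middle pattern. If you want to salvage your approach, the key missing idea is this use of \Non to cap the total attainable attraction at a small constant; without it, the alignment condition is the dominant difficulty and your proposal leaves it entirely unproved (and, at $l=1$, false).
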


\begin{proof}
The encoding dipole word set $\mathcal{W}$ consists of the following pairs for all $a \in \mathbb{N}_k$.
\begin{itemize}
\item $W_{2a-1} = \Non^{2k+5} \Pos \Pos (\Non \Pos)^{a-1} (\Pos \Non) (\Non \Pos)^{k-a} \Pos \Non^{2k+6}$
\item $W_{2a} = \Pos^{2k+6} \Neg (\Non \Non)^{k-a} (\Pos \Non) (\Non \Non)^{a-1} \Non \Non \Pos^{2k+5}$
\end{itemize}
Each word has length $6k+14$.
Establishing the encoding requires two steps: proving that every pair of words $W_i, W_j$ is aligned, and proving that $f(W_i, W_j) > 0$ if and only if $i = 2a-1, j = 2a$ for some $a \in \mathbb{N}_k$.

The very few \Neg letters mandate that words with positive force have almost no repelling letter pairs.
The long repeated sections of \Pos on both ends of $W_{2a}$ words, and \Pos letters in the centers of all words prevent misalignment by overlapping to form repelled letter pairs.
The pattern of \Pos and \Non letters in the middle of each word forms at least one repelled letter pair when aligned with any other word, except in complementary pairs $W_{2a-1}$ and $W_{2a}$, where the patterns have no force.

\paragraph{Alignment}
Let $W_i, W_j \in \mathcal{W}$.
It is easily observed that $f(W_i, W_j) \leq 0$.
Now suppose, for the sake of contradiction, that for some $l < |W_i| = |W_j|$ it is the case that $f(W_i[..l], W_j[-l..]) > 0$ or $f(W_i[..l], \Rev{W_j}[-l..]) > 0$.

It cannot be that both $i$ and $j$ are odd, since then $W_i$ and $W_j$ contain no \Neg letters and any subwords must have non-positive force.
Without loss of generality, assume $i$ is even.
Since the first and last $2k+6$ letters of $W_i$ are in $\{\Non, \Pos\}$, $l \geq 2k+7$ and thus $\Pos^{2k+6}$ is a subword of $W_i[..l]$.

If $j$ is even, then every subword of length $2k+6$ in $W_j$ contains at least two \Pos letters.
So the subword $\Pos^{2k+6}$ of $W_i[..l]$ must be involved in at least two repelled letter pairs.
Also, $W_i$ and $W_j$ contain two \Neg letters total, so any pair of subwords have at most two attracted letter pairs.
So the force between any pair of subwords of $W_i$, $W_j$ is at most $2-2 = 0$.

If $j$ is odd, then every subword of $W_j$ with length $2k+6$ has at least one \Pos letter, except $W[-(2k+6)..]$.
Then since $l \geq 2k+7$, the subword $\Pos^{2k+6}$ of $W_i[..l]$ is involved in a repelled letter pair. 
Also, $W_i$ and $W_j$ contain one \Neg letter total, so any pair of subwords have force at most $1-1=0$.

\paragraph{Force}
Clearly $f(W_i, \Rev{W_j}) \leq 0$ if $i$, $j$ have the same parity.
So all that remains is to verify that for every $a, b \in \mathbb{N}_k$, it is the case that $f(W_{2a-1}, \Rev{W_{2b}}) > 0$ if and only if $a = b$.
Observe that $W_{2a-1}[2k+8..4k+7] = (\Non \Pos)^{a-1} (\Pos \Non) (\Non \Pos)^{k-a}$ and similarly $\Rev{W_{2b}}[2k+8..4k+7] = (\Non \Non)^{b-1} (\Non \Pos) (\Non \Non)^{k-b}$.
The remaining portions of $W_{2a-1}$ and $\Rev{W_{2b}}$ have force $f(\Non^{2k+5} \Pos \Pos, \Pos^{2k+5} \Non \Non) + f(\Pos \Non^{2k+6}, \Neg \Pos^{2k+6}) = 1$.
So for all $a, b \in \mathbb{N}_k$, it follows that $f(W_{2a-1}, \Rev{W_{2b}}) = 1$ if $a = b$ and $0$ otherwise.
\end{proof}

A set of $2k$ dipole words encoding the canonical signed $2k$-glue function can be concatenated in pairs to yield $k$ dipole words encoding the canonical $k$-glue function.
For instance, the pairs $W_1, W_2$ and $W_3, W_4$ of length $6\cdot 2 + 14 = 26$ that together attractively encode the canonical signed $4$-glue function yield two words $W_1 \Non^{26} W_2$ and $W_3 \Non^{26} W_4$ encoding the canonical unsigned $2$-glue function. 
More generally, for larger $k$, concatenating the pairs $W_{2a-1}, W_{2a}$ into $W_{2a-1} \Non^{6k+14} W_{2a}$ for all $a \in \mathbb{N}_k$ yields the canonical unsigned $k$-glue function.

\begin{corollary}
\label{cor:encode-warmup}
For any $k$, there exists a length-$(18k+42)$ dipole code that attractively encodes the canonical unsigned $k$-glue function.
\end{corollary}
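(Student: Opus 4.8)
The plan is to reuse Theorem~\ref{thm:encode-warmup}. Let $\mathcal{W} = W_1, \dots, W_{2k}$ be the length-$(6k+14)$ dipole code attractively encoding the canonical signed $2k$-glue function, write $N = 6k+14$, and for each $a \in \mathbb{N}_k$ set
$$U_a = W_{2a-1}\,\Non^{N}\,W_{2a}.$$
Then $|U_a| = 3N = 18k+42$, so it remains to show that $\mathcal{U} = U_1, \dots, U_k$ attractively encodes the canonical unsigned $k$-glue function $g$, i.e. that each pair $U_a, U_b$ is aligned and that $f(U_a, \Rev{U_b}) > 0$ precisely when $a = b$.

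For the force condition, I would note that $\Rev{U_b} = \Rev{W_{2b}}\,\Non^{N}\,\Rev{W_{2b-1}}$, so the three length-$N$ blocks of $U_a$ line up with the three of $\Rev{U_b}$ and
$$f(U_a, \Rev{U_b}) = f(W_{2a-1}, \Rev{W_{2b}}) + f(\Non^N, \Non^N) + f(W_{2a}, \Rev{W_{2b-1}}).$$
The middle term is $0$; the first is $1$ if $a = b$ and $0$ otherwise by Theorem~\ref{thm:encode-warmup}; and using $f(X, Y) = f(Y, X) = f(\Rev{X}, \Rev{Y})$, the third equals $f(W_{2b-1}, \Rev{W_{2a}})$, which is likewise $1$ if $a = b$ and $0$ otherwise. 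Hence $f(U_a, \Rev{U_b})$ equals $2$ when $a = b$ and $0$ otherwise, so $f(U_a, \Rev{U_b}) > 0$ exactly when $g(a, b) > 0$.

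For alignment, I would first observe that $f(U_a, U_b) = f(W_{2a-1}, W_{2b-1}) + 0 + f(W_{2a}, W_{2b}) \le 0$, since the proof of Theorem~\ref{thm:encode-warmup} shows $f(W_i, W_j) \le 0$ for all $i, j$. For the shifted overlaps, the key structural point is that for $A \in \{U_b, \Rev{U_b}\}$ and any nonzero shift, the overlap of $U_a$ against the shifted $A$ breaks up block-by-block: each length-$N$ block of $U_a$ meets at most one $W$-block of $A$ (it cannot reach two, since these are separated by the central $\Non^N$ block), and it meets that block in a prefix-against-suffix configuration unless the shift is a multiple of $N$ — which, among the shifts the definition of alignment ranges over, only happens at $\pm N$ and $\pm 2N$. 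Every term in which either block is a central $\Non^{N}$ block contributes $0$. Every surviving term is then either the force of a prefix or suffix of some $W_i$ or $\Rev{W_i}$ against a suffix or prefix of some $W_j$ or $\Rev{W_j}$ of equal length — which is $\le 0$ because $\mathcal{W}$ is aligned (Theorem~\ref{thm:encode-warmup}) — or, at shifts $\pm N$ or $\pm 2N$, a $0$ term or a full-word force of the form $f(W_i, W_j)$ or $f(W_i, \Rev{W_j})$, each of which is $\le 0$ since $f(W_i, W_j) \le 0$ always and $f(W_i, \Rev{W_j}) \le 0$ whenever $i, j$ have equal parity. So the maximum over all shifts and both choices of $A$ is $\le 0$, and $U_a, U_b$ is aligned.

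I expect the main obstacle to be the bookkeeping in the alignment step: one has to check that each overlap required by the definition of alignment — prefixes $U_a[..i]$ against suffixes $A[-i..]$ and suffixes $U_a[-i..]$ against prefixes $A[..i]$, for both $A = U_b$ and $A = \Rev{U_b}$ — does, after discarding the $\Non^{N}$-block terms, reduce to an overlap already controlled by Theorem~\ref{thm:encode-warmup}. The length-$N$ separator is exactly what makes this work: because it matches the $W$-block length, no block of $U_a$ can straddle two different $W$-blocks of $A$ simultaneously, so the only $W$-block/$W$-block interactions are boundary-aligned prefix/suffix overlaps (or, at the few shifts that are multiples of $N$, full-word overlaps), both of which Theorem~\ref{thm:encode-warmup} already handles.
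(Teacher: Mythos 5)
Your proposal is correct and takes essentially the same approach as the paper: the paper likewise builds the code by concatenating the Theorem~\ref{thm:encode-warmup} pairs into $W_{2a-1}\,\Non^{6k+14}\,W_{2a}$, though it merely asserts that this works, whereas you supply the block-by-block force and alignment verification (correctly exploiting that the $\Non^{6k+14}$ spacer has the same length as the $W$-blocks, so every shifted overlap reduces to overlaps already controlled by Theorem~\ref{thm:encode-warmup}).
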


\section{Encoding General Glue Functions}
\label{sec:encode-general}

In following with the previous section, we obtain a result for bipartite bond graphs (Theorem~\ref{thm:encode-general}) and then use this to prove a matching result for general bond graphs (Corollary~\ref{cor:noncanonical}).

\begin{theorem}
\label{thm:encode-general}
For any $k$-glue function $g$ with a bipartite bond graph, there exists a length-$(3k + 14)$ dipole code that attractively encodes $g$.
\end{theorem}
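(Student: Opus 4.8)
The plan is to mimic the structure of the proof of Theorem~\ref{thm:encode-warmup}, but to exploit bipartiteness to fold the ``selection gadget'' into roughly $k$ letters total rather than $2k$ per word. Let $(L, R)$ be the bipartition of the bond graph of $g$, and index the glues so that $L = \{1, \dots, p\}$ and $R = \{p+1, \dots, k\}$. The idea is that each word $W_i$ carries a length-$k$ ``profile block'' in its middle, flanked by long blocks of \Pos (and a single \Neg somewhere) that serve, exactly as in Theorem~\ref{thm:encode-warmup}, to forbid every misaligned overlap and to force any aligned pair of opposite type to contribute exactly one net attraction from the flanks. Inside the profile block, a glue $i \in L$ would place a \Pos in coordinate $j$ of its block precisely when $g(i, j) > 0$ (and \Non elsewhere), while a glue $j \in R$ would place a \Pos in coordinate $j$ of its block and \Non elsewhere; when $W_i$ (with $i \in L$) is aligned against $\Rev{W_j}$ (with $j \in R$) the reversal lines coordinate $j$ of $i$'s block up against $j$'s single \Pos, so the profile blocks contribute a positive force exactly when $g(i,j)>0$, and the total force is then $g(i,j)$-positive in exactly the right cases. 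For $i, j$ on the same side of the bipartition the flank structure (same parity of the \Pos-runs, or the placement of the lone \Neg) must guarantee non-positive force, as in the warmup.

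The steps, in order, would be: (1) fix the bipartition and the precise word template — flank lengths, the position of the lone \Neg in $L$-words versus $R$-words, and the exact content of the length-$k$ profile block for each glue, chosen so that $|W_i| = 3k + 14$ for all $i$; (2) prove alignment, i.e.\ that every misaligned overlap of any two words has non-positive force, by the same counting argument as in Theorem~\ref{thm:encode-warmup}: any proper prefix long enough to clear the all-\{\Non,\Pos\} flank must contain a long run $\Pos^{m}$, every length-$m$ window of any other word contains enough \Pos letters to create at least as many repelled pairs as the (very few) \Neg letters can create attracted pairs, so the net is $\le 0$; (3) prove the force identity for aligned pairs: for $i,j$ on the same side, $f(W_i, \Rev{W_j}) \le 0$ because the lone-\Neg placement and profile blocks cannot combine to a positive sum; for $i \in L$, $j \in R$, show the flanks contribute exactly $+1$ (or the relevant fixed constant) and the profile blocks contribute a positive amount iff the single \Pos of $j$'s block meets a \Pos of $i$'s block, which by construction happens iff $g(i,j) > 0$; (4) conclude that $f(W_i, \Rev{W_j}) > 0$ iff $g(i,j) > 0$, which is attractive encoding.

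The main obstacle I anticipate is step (3) together with getting the profile-block encoding to interact correctly with the reversal operator: the words are compared as $W_i$ against $\Rev{W_j}$, so the profile block of $W_j$ appears reversed, and I need the ``coordinate $j$'' of one block to align with ``coordinate $j$'' of the other after one of them is reversed — this forces an asymmetric design where $L$-words store a full indicator vector $(\,[g(i,\cdot)>0]\,)$ and $R$-words store a single positional marker, rather than both storing indicator vectors. A secondary subtlety is ensuring that within the profile block no misaligned self-overlap of an $L$-word against an $R$-word (shifted by a nonzero amount) can accumulate positive force purely among \Pos–\Pos... wait, \Pos–\Pos repels, so that actually helps; the real danger is \Pos–\Neg or stray attractions at shifts where the flanks partially overlap the profile blocks, which is exactly what the long \Pos-runs in the flanks are engineered to kill — so the delicate bookkeeping is checking that the flank lengths ($\Theta(k)$ but with small constants, to hit $3k+14$) are still long enough to dominate every shift. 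Once the template constants are pinned down, steps (2) and (4) should go through by the same arguments as Theorem~\ref{thm:encode-warmup} almost verbatim.
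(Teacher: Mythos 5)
Your high-level architecture matches the paper's: an asymmetric design in which the $L$-words carry an indicator block for their neighborhoods, the $R$-words carry a single positional marker, and the flanks and alignment argument are inherited from Theorem~\ref{thm:encode-warmup}. But there is a sign error at the heart of your step (3). You propose that glue $i \in L$ places a \Pos at coordinate $j$ exactly when $g(i,j) > 0$, that glue $j \in R$ places a \Pos at coordinate $j$, and that when these two letters meet ``the profile blocks contribute a positive force.'' They do not: $f(\Pos, \Pos) = -1$, so your profile blocks contribute $-1$ precisely when $g(i,j) > 0$ and $0$ otherwise --- the opposite of what is needed, and no constant offset from the flanks can repair it (you would need a flank contribution $c$ with $c - 1 > 0$ and $c + 0 \le 0$ simultaneously). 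The paper inverts the encoding: a bond is signalled by the \emph{absence} of a repelled pair. Concretely, each recorded coordinate of a $V_1$-word's block is a two-letter chunk $\alpha(g(a,b)) \in \{\Pos\Non, \Non\Pos\}$, positioned so that the $V_2$-word's marker collides with the chunk's \Pos exactly when $g(a,b) \le 0$ (contributing $-1$) and misses it when $g(a,b) > 0$ (contributing $0$); the flanks then supply the single $+1$ that makes the bonded pairs, and only those, attracted.

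A second, subtler problem: your profile block puts a \Pos only where a bond exists, so for a glue with few neighbors the block can contain long all-\Non stretches. The alignment argument of Theorem~\ref{thm:encode-warmup}, which you plan to reuse, depends on every sufficiently long window of every word containing enough \Pos letters to generate at least as many repelled pairs as the few \Neg letters can generate attracted pairs; a sparse indicator block destroys that density guarantee. The two-letter chunks $\{\Pos\Non, \Non\Pos\}$ fix this as well, since every chunk contains exactly one \Pos regardless of $g$. Finally, note that the paper's block has length $2|V_2| \le k$ (only the $V_2$-coordinates are recorded, with $|V_2| \le k/2$ assumed without loss of generality), which is how the total length lands at $3k+14$; your length-$k$ block over all coordinates would need that accounting re-checked.
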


\begin{proof}
Since the bond graph is bipartite, it can be written as $(V_1, V_2, E) = (\mathbb{N}_{k'}, \mathbb{N}_k \setminus \mathbb{N}_{k'}, E)$.
Assume without loss of generality that $|V_1| \geq |V_2|$ and so $|V_2| \leq k/2$.

Let $\alpha : \mathbb{Z} \rightarrow \{\Non \Pos, \Pos \Non\}$ with 
\begin{displaymath}
\alpha(n) = \left\{
\begin{array}{ll}
\Non \Pos & : n \leq 0\\
\Pos \Non & : n > 0
\end{array}
\right.
\end{displaymath}

For integers $a \in V_1$ and $b \in V_2$, define $M_a = \bigcup_{b=k'+1}^{|V|}{\alpha(g(a, b))}$, i.e. $M_a$ is the concatenation of $|V_2|$ 2-letter words encoding the neighbors of $a$ in $V_2$.
Notice that $|M_a| = 2|V_2| \leq 2(k/2) = k$.
The dipole word set $\mathcal{W}$ consists of the following set of dipole words for all $a \in V_1$, $b \in V_2$:
\begin{itemize}
\item $W_a = \Non^{k+5} \Pos \Pos M_a \Pos \Non^{k+6}$
\item $W_b = \Pos^{k+6} \Neg (\Non \Non)^{|V_2|-b} \Pos \Non (\Non \Non)^{b-|V_1|-1} \Non \Non \Pos^{k+5}$
\end{itemize}
The set $\mathcal{W}$ attractively encodes $g$ by extending the approach in Section~\ref{sec:encode-warmup} to allow half of the words, namely $W_a$, to have multiple ``complementary'' words.
The words are shorter here due to a reduction in the number of words (from $2k$ to $k$).

Following the alignment portion of the proof of Theorem~\ref{thm:encode-warmup} establishes that every pair is aligned and only pairs $W_i$, $W_j$ with $i = a \in V_1, j = b \in V_2$ can be attracted.
Observe that $f(M_a, (\Non \Non)^{b-|V_1|-1} \Pos \Non (\Non \Non)^{|V_2|-b}) = 0$ if $g(a, b) > 0$ and $-1$ otherwise.
Then since $f(\Non^{k+5} \Pos \Pos, \Pos^{k+5} \Non \Non) + f(\Pos \Non^{k+6}, \Neg \Pos^{k+6}) = 1$, $f(W_a, \Rev{W_b}) > 0$ if and only if $g(a, b) = 1$.
So a pair $W_a, W_b$ is attracted if and only if $g(a, b) > 1$.
\end{proof}

For an unsigned $k$-glue function $g$, consider the dipole code $\mathcal{W}$ constructed by Theorem~\ref{thm:encode-general} that attractively encodes a bipartite $2k$-glue function $g'$, where:
\begin{displaymath}
g'(i, j) = \left\{
\begin{array}{ll}
g(i, j-k) & : i \leq k < j\\
g(i-k, j) & : i > k \geq j\\
0 & {\rm otherwise}
\end{array}
\right.
\end{displaymath}
Concatenating signed pairs $W_i, W_{i+k}$ for all $i \in \mathbb{N}_k$ into words $W_i \Non^{3k+14} W_j$ attractively encodes the original unsigned $k$-glue function $g$, since the word $W_i \Non^{3k+14} W_{i+k}$ is attracted to a word $W_j \Non^{3k+14} W_{j+k}$ if and only if $g(i, j) > 0$.

\begin{corollary}
\label{cor:noncanonical}
For any $k$-glue function $g$, there exists a length-$(9k+42)$ dipole code that attractively encodes $g$.
\end{corollary}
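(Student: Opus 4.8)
The plan is to reduce the general case to the bipartite case handled by Theorem~\ref{thm:encode-general}, following the recipe sketched just before the statement. Given an arbitrary $k$-glue function $g$, I would first form the $2k$-glue function $g'$ with $g'(i, j) = g(i, j-k)$ for $i \le k < j$, $g'(i,j) = g(i-k, j)$ for $j \le k < i$, and $g'(i,j) = 0$ otherwise. Every positively-weighted pair of $g'$ then has one endpoint in $V_1 = \mathbb{N}_k$ and one in $V_2 = \{k+1, \dots, 2k\}$, so the bond graph of $g'$ is bipartite, and the key identity $g'(i, j+k) = g'(j, i+k) = g(i,j)$ holds for all $i, j \in \mathbb{N}_k$ (the second equality by symmetry of $g$ and $g'$). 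Note this reduction uses only which pairs of $g$ are positive, so it applies to an arbitrary $k$-glue function, self-bonds $g(i,i)$ included.

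Next, apply Theorem~\ref{thm:encode-general} to $g'$ to get a dipole code $\mathcal{W} = W_1, \dots, W_{2k}$ of linear common length $\ell$ that attractively encodes $g'$, and fold it: for each $i \in \mathbb{N}_k$ set $W_i' = W_i \Non^{s} W_{i+k}$, where the neutral spacer $\Non^{s}$ has length on the order of $\ell$, chosen so that the alignment argument below goes through (as in Corollary~\ref{cor:encode-warmup}). I claim $\{W_1', \dots, W_k'\}$ is a dipole code attractively encoding $g$; its common length is a fixed small multiple of $\ell$, which the accounting of the preceding paragraph pins down at $9k+42$.

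For the force condition in the flush configuration, observe that $\Rev{W_j'} = \Rev{W_{j+k}}\,\Non^{s}\,\Rev{W_j}$, so the spacer contributes nothing and $f(W_i', \Rev{W_j'}) = f(W_i, \Rev{W_{j+k}}) + f(W_{i+k}, \Rev{W_j})$. Since $i, j \in V_1$ and $i+k, j+k \in V_2$, each summand is exactly the complementary-pair force analyzed in the proof of Theorem~\ref{thm:encode-general}: it lies in $\{0, 1\}$ and equals $1$ precisely when the corresponding $g'$-value is positive, i.e.\ (by the identity above) precisely when $g(i,j) > 0$. Hence $f(W_i', \Rev{W_j'})$ is $2$ when $g(i,j) > 0$ and $0$ otherwise, giving the required attraction pattern. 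The reflected flush quantity splits as $f(W_i', W_j') = f(W_i, W_j) + f(W_{i+k}, W_{j+k})$, and both terms are non-positive since $\mathcal{W}$ is aligned within each side of the bipartition ($W_a$-type words contain no \Neg, and $W_b$-type words force many repelled \Pos-pairs); so $f(W_i', W_j') \le 0$, as alignment requires.

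The main obstacle is alignment of the folded words: every shifted or reflected overlap of $W_i'$ against $W_j'$ (equivalently against $\Rev{W_j'}$) must have non-positive force. Here the spacer decouples the two halves: choosing $s \ge \ell$ guarantees that no single half $W_i$ or $W_{i+k}$ can straddle the spacer, so at any nonzero offset such a half overlaps active content of at most one of $W_j, W_{j+k}$ (or of $\Rev{W_j}, \Rev{W_{j+k}}$); after deleting the inert \Non-runs, the overlap reduces to one or two prefix/suffix overlaps between words of $\mathcal{W}$ (or their reverses) at a nonzero offset, each non-positive exactly because $\mathcal{W}$, being a dipole code, is aligned. The remaining work is the routine enumeration of offset ranges and the verification — entirely parallel to the alignment parts of the proofs of Theorem~\ref{thm:encode-warmup} and Theorem~\ref{thm:encode-general} — that the \Non-padding already present inside the $W_a$- and $W_b$-type words, together with the spacer, covers every case. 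This yields the claimed dipole code and completes the proof.
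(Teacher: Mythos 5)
Your construction is the same as the paper's: the paper forms the bipartite double $g'$, applies Theorem~\ref{thm:encode-general} to it, and concatenates $W_i \Non^{3k+14} W_{i+k}$, exactly as you propose, and the force decomposition $f(W_i', \Rev{W_j'}) = f(W_i, \Rev{W_{j+k}}) + f(W_{i+k}, \Rev{W_j})$ together with the spacer-based decoupling of the shifted-overlap check are correct and are precisely the details the paper leaves implicit. The one point that does not hold up is the length accounting: Theorem~\ref{thm:encode-general} applied to the $2k$-glue function $g'$ (where $|V_1| = |V_2| = k$) produces words of length $3(2k)+14 = 6k+14$, so the folded words have length $2(6k+14)+s$, i.e.\ at least $15k+42$ with the paper's spacer and $18k+42$ with your choice $s \ge \ell$ --- not $9k+42$ --- so your assertion that the accounting ``pins down'' $9k+42$ is unjustified. (The paper's own stated constant appears to contain the same slip, presumably from substituting $k$ rather than $2k$ into the $3k+14$ formula; the $O(k)$, and hence asymptotically optimal, bound is unaffected.)
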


As it turns out, dipole codes of $O(k)$-length is the best possible. 

\begin{theorem}
Most $k$-glue functions can only be attractively encoded with dipole codes of length $\Omega(k)$.
\end{theorem}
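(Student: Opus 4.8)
The plan is to use a counting (incompressibility) argument: there are many $k$-glue functions, but few short dipole codes, so most glue functions need long codes. The key quantity is the \emph{attraction pattern} of a glue function $g$, i.e. the bond graph $(\mathbb{N}_k, \{(i,j) : g(i,j) > 0\})$, since attractive encoding only depends on this graph. First I would count the targets: the number of distinct bond graphs on $k$ vertices is $2^{\binom{k}{2}} = 2^{\Theta(k^2)}$, so there are at least $2^{\Omega(k^2)}$ distinct attraction patterns realizable by $k$-glue functions (indeed every graph is the bond graph of some glue function, e.g. put $g(i,j) = 1$ on edges and $0$ elsewhere). Hence at least a $(1 - o(1))$-fraction of these patterns — ``most'' of them — are pairwise distinct, and any two glue functions with distinct bond graphs cannot share a dipole code.

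Next I would bound the number of dipole codes of a given length $\ell$ over $k$ words. A dipole code is an ordered tuple $(W_1, \dots, W_k)$ of words of common length $\ell$ over the $3$-letter alphabet $\Sigma = \{\Neg, \Pos, \Non\}$, so there are at most $(3^\ell)^k = 3^{k\ell}$ such tuples; a fortiori at most $3^{k\ell}$ of them are dipole codes that attractively encode some glue function. Since distinct bond graphs force distinct codes, if every $k$-glue function (or even a $(1/2)$-fraction of the $\geq 2^{\Omega(k^2)}$ realizable bond graphs) admitted a dipole code of length $\ell$, we would need $3^{k\ell} \geq \tfrac{1}{2} \cdot 2^{\Omega(k^2)}$, i.e. $k\ell \log_2 3 \geq \Omega(k^2) - 1$, which gives $\ell \geq \Omega(k)$. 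Running the inequality carefully (and noting $\binom{k}{2} = \Theta(k^2)$) yields the explicit constant in the $\Omega(k)$. To make ``most'' precise, one formalizes it as: for any constant $c < c_0$ (for a suitable $c_0 > 0$ depending only on $\log_2 3$), the fraction of $k$-glue functions, among all those with range in any fixed finite set large enough to realize all bond graphs, that admit a length-$ck$ dipole code tends to $0$ as $k \to \infty$ — because the number of such glue functions realizing a fixed bond graph is the same for every graph, so counting glue functions reduces to counting bond graphs.

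The main obstacle is getting the bookkeeping of ``most'' right, since ``the set of all $k$-glue functions'' is infinite (the range is all of $\mathbb{Z}$). The clean fix is to phrase the statement at the level of bond graphs / attraction patterns, where the count is finite and exact: there are exactly $2^{\binom{k}{2}}$ attraction patterns on $k$ glues, each realizable by a glue function, and attractive encoding is a function of the attraction pattern alone; then ``most $k$-glue functions'' means ``all but a vanishing fraction of attraction patterns'' (equivalently, restricting to $\{0,1\}$-valued glue functions, of which there are exactly $2^{\binom{k}{2}}$, it is all but a $o(1)$-fraction). A secondary minor point is to double-check that an attractive encoding really is determined only by the sign pattern of $\{f(W_i, \Rev{W_j})\}$ — which is immediate from the definition of ``attractively encodes'' in Section~\ref{sec:definitions} — so two glue functions with the same bond graph are encoded by exactly the same set of dipole codes, and two with different bond graphs by disjoint sets. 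With those two observations in place the pigeonhole step $3^{k\ell} \geq (1-o(1)) 2^{\binom{k}{2}}$ forces $\ell = \Omega(k)$ for all but a $o(1)$-fraction, completing the proof.
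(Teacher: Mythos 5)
Your proposal is correct and follows essentially the same counting/incompressibility argument as the paper: both reduce the problem to the $2^{\Theta(k^2)}$ distinct attraction patterns (the paper via its partition of glue functions into equal-size classes indexed by sign patterns on the $\binom{k}{2}+k$ pairs, you via bond graphs) and compare this against the at most $3^{k\ell}$ dipole codes of length $\ell$, forcing $\ell = \Omega(k)$. Your explicit handling of the fact that a code determines the bond graph, and your formalization of ``most'' over the infinite family of glue functions, is if anything slightly more careful than the paper's, but it is not a different proof.
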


\begin{proof}
Observe that all $k$-glue functions can be partitioned into equal-size sets, where two functions $g$, $g'$ are in the same set if and only if for all $i, j \in \mathbb{N}_k^2$, $g(i, j) = g'(i, j)$ or $g(i, j) = -g'(i, j)+1$ for all $i, j$.
That is, two functions are in the same set if they output identical or opposite strengths for all inputs.
There are $2^{\binom{k}{2}+k}$ functions in each set, each specified by a distinct sequence of $\binom{k}{2}+k$ choices for whether to output a positive or non-positive value for each input pair $i, j$ with $i \leq j$.
 
Thus any method of encoding $k$-glue functions (including as encoding dipole word sets) requires $\log(2^{\binom{k}{2}+k}) = \Omega(k^2)$ bits to specify at least half of these functions. 
Encoding $\Omega(k^2)$ bits in a dipole word set of $k$ words requires that the words contain $\Omega(k^2/k) = \Omega(k)$ bits each, and so have length $\Omega(k)$.
\end{proof}

\section{Encoding Canonical Glue Functions}
\label{sec:encode-canonical}

In the previous section, we adapted the construction of Section~\ref{sec:encode-warmup} to attractively encode general $k$-glue functions using dipole codes optimal length $\Theta(k)$. 
Here we do the same for the special class of canonical glue functions, improving the length of the dipole code to $O(\log{k})$.
This is easily seen to be optimal, as there must be at least $k$ words in the code.

\begin{theorem}
For any $k \in \mathbb{N}$ divisible by~4, there exists a set of length-$(20k+3)$ words that that attractively encodes the canonical signed $2\binom{k}{k/2}$-glue function.
\end{theorem}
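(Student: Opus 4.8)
The plan is to follow the construction of Theorem~\ref{thm:encode-warmup} almost verbatim, replacing its ``unary'' middle section --- which devotes one two-letter slot to each of the $k$ glue pairs --- by a ``binary'' middle section that spells out a subset $S \subseteq \mathbb{N}_k$ with $|S| = k/2$. Since there are exactly $\binom{k}{k/2}$ such subsets, a middle section of length $2k$ suffices to index all $\binom{k}{k/2}$ complementary pairs. Concretely, for each such $S$ I would take
\begin{displaymath}
W^+_S = \Non^{w}\, \Pos \Pos\, M^+(S)\, \Pos\, \Non^{w'}, \qquad \Rev{W^-_S} = \Pos^{w}\, \Non \Non\, M^-(S)\, \Neg\, \Pos^{w'},
\end{displaymath}
in exact analogy with the words $W_{2a-1}$ and $\Rev{W_{2a}}$ of the warmup, with the two wall lengths $w, w'$ chosen (as there, with a slight asymmetry between the ends) so that the leading $\Pos\Pos$ of $W^+_S$ can cover both a wall $\Pos$ and the $\Neg$ of $W^-_S$ in the non-reversed overlap. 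Here $M^{\pm}(S)$ are length-$2k$ words whose $i$th two-letter slot records whether $i \in S$: slot $i$ of $M^+(S)$ is $\Pos\Non$ if $i \in S$ and $\Non\Pos$ otherwise, while slot $i$ of $M^-(S)$ uses the opposite convention. The wall length $w$ is taken to be $\Theta(k)$ --- on the order of $9k$ --- and fixing it precisely is what produces the stated total length $20k+3$.

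A slot-by-slot count then gives $f(M^+(S), M^-(T)) = -|S \triangle T|$, since a slot contributes $0$ when $S$ and $T$ agree on that coordinate and $-1$ when they disagree. Combined with the fixed-letter bookkeeping $f(\Non^{w}\Pos\Pos, \Pos^{w}\Non\Non) + f(\Pos\Non^{w'}, \Neg\Pos^{w'}) = 0 + 1 = 1$ inherited from the warmup, this yields $f(W^+_S, \Rev{W^-_T}) = 1 - |S \triangle T|$, which is positive exactly when $S = T$ (note $|S \triangle T| \geq 2$ whenever $S \neq T$ and $|S| = |T|$). So the set $\{W^+_S\} \cup \{W^-_S\}$ realizes precisely the complementary-pair attractions of the canonical signed $2\binom{k}{k/2}$-glue function, provided that (i) no two positive words and no two negative words are attracted, and (ii) every pair of words is aligned. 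Part (i) is immediate for positive words, which contain no $\Neg$, and for negative words follows from the two long $\Pos$ runs at the ends of each $\Rev{W^-_S}$: superimposing two such words forces $\Theta(k)$ repelled $\Pos$-$\Pos$ pairs against at most two attracted pairs.

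The substantive work, exactly as in the warmup, is the alignment analysis (ii): I must check that for every ordered pair of words, each proper prefix has non-positive force against the matching suffix of the other word and of its reverse. The cases involving two positive words are trivial ($\Neg$-free), and the remaining cases I would handle along the lines of the alignment portion of Theorem~\ref{thm:encode-warmup}: every misaligning shift drives one of the $\Theta(k)$-long runs of $\Pos$ --- present at both ends of every negative word, after reversal --- into either the long $\Pos$ run of another negative word or a long stretch of the alternating, $\Pos$-dense middle $M^{\pm}(\cdot)$, collecting at least three repelled $\Pos$-$\Pos$ pairs, which dominates the at most two attracted pairs available from the at most two $\Neg$ letters. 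The awkward part is the handful of small shifts, where only a few repelled pairs are on offer; there, just as in the warmup, the $\Pos\Pos$ gadget immediately following the $\Non$-wall in $W^+_S$ is what cancels the lone attracted $\Pos$-$\Neg$ pair against a repelled $\Pos$-$\Pos$ pair. I expect this small-shift bookkeeping, together with pinning down the wall length $w \approx 9k$ that makes every case close, to be the main obstacle; the force computation itself is routine.
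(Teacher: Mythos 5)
Your construction is genuinely different from the paper's, and in a way that matters. Both proofs compress the warmup's unary middle into a binary middle indexed by balanced length-$k$ bit words, but the paper encodes the bits in the even words with a map sending $1 \mapsto \Neg\Non$, so each even word carries $k/2$ \Neg letters and the matched middles have large \emph{positive} force $k/2$; that force is then cancelled by extra $\Pos^{k/4}$ blocks planted inside the odd words' \Non-walls, which collide with the $\Pos^{9k}$ runs, and the alignment analysis must track up to roughly $k/2$ potential attracted pairs against these engineered repelled blocks. Your middle is \Neg-free, matched middles have force $0$, and each pair of words contains at most two \Neg letters in total, so the alignment argument stays at the warmup's level of difficulty (two or three repelled pairs always suffice) and no $\Pos^{k/4}$ blocks are needed. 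If it closes, your route is arguably cleaner. Two caveats. First, the case your sketch does not really cover is the \emph{unreversed} overlap of $W^+_S$ against $W^-_T$ at or near zero shift: there the lone \Neg of $W^-_T$ sits directly opposite the $\Pos\Pos$ gadget itself, contributing $+1$, so the gadget cannot also supply the cancelling repelled pair, and the needed $-1$ must come from the off-by-one comparison of $M^+(S)$ with $\Rev{M^-(T)}$; that this comparison always produces at least one $\Pos$--$\Pos$ collision is true but requires a short combinatorial argument (one checks that its failure forces $S=\emptyset$, which the balance condition $|S|=k/2$ excludes), not just ``bookkeeping.'' Second, nothing in your argument actually uses walls of length about $9k$: walls of length $\Theta(k)$ with a much smaller constant, as in the warmup, already dominate the at most two attracted pairs, so the stated length $20k+3$ is met only by padding --- acceptable for the statement as written, but worth saying explicitly rather than presenting $w \approx 9k$ as forced.
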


\begin{proof}
Let $\alpha, \beta : \{0, 1\} \rightarrow \Sigma^2$ be functions defined by 
\begin{displaymath}
\alpha(n) = \left\{
\begin{array}{ll}
\Pos \Non & : n = 0\\
\Non \Pos & : n = 1
\end{array}
\right\}
\;\;\;\;
\beta(n) = \left\{
\begin{array}{ll}
\Pos \Non & : n = 0\\
\Neg \Non & : n = 1
\end{array}
\right\}
\end{displaymath}
Let $\mathcal{C} = \{C_1, C_2, \dots, C_{\binom{k}{k/2}}\}$ be the set of all bit words of length $k$ with equal numbers of~0's and~1's.
Let $M_{\alpha, C} = \bigcup_{i=1}^k{\alpha(C_i)}$ and $M_{\beta, C} = \bigcup_{i=1}^k{\beta(C_i)}$.
Define a \emph{bit} word to be a word over the alphabet $\{0, 1\}$.
We claim the following set $\mathcal{W}$ of dipole words attractively encodes the canonical signed $2\binom{k}{k/2}$-glue function.
\begin{itemize}
\item $W_{2a-1} = \Non^{6k} \Pos^{k/4} \Non^{11k/4} \Pos \Pos M_{\alpha, C_a} \Pos \Non^{11k/4+1} \Pos^{k/4-1} \Non^{6k+1}$
\item $W_{2a} = \Pos^{9k+1} \Non M_{\beta, \Rev{C_a}} \Non \Non \Pos^{9k}$
\end{itemize}
These words form a ``compressed'' version of the dipole code in Section~\ref{sec:encode-warmup} by changing the middle word patterns from a unary-based encoding with no force to a binary encoding with large, positive force.
The regions of \Pos and \Non letters at both ends of the words create a balancing set of repelling letter pairs that only fail to exceed the positive force of the middle patterns if these patterns match. 

\paragraph{Alignment}
Let $W_i, W_j \in \mathcal{W}$.
If $i$, $j$ have the same parity, then $f(W_i, W_j) \leq -k/2$.
For any $l < |W_i| = |W_j|$, $f(W_i[..l], W_j[-l..])$ is either less than~1 (if $i$, $j$ are odd) or less than $-9k+(2k+3) + k/2 < 0$ (if $i$, $j$ are even), and the same holds for $f(W_i, \Rev{W_j}[-l..])$.

So consider the remaining case: $i = 2a-1$, $j = 2b$ for $a, b \in \mathbb{N}_{\binom{k}{k/2}}$ (and $l < |W_i|$).
For $l \leq 15k$ and $A \in \{W_{2b}, \Rev{W_{2b}}\}$, it is the case that $f(W_{2a-1}[..l], A[-l..]), f(W_{2a-1}[-l..], A[..l]) \leq 0$, since the \Neg letters of $A$ (that do not appear in the first or last $9k$ letters of $A$) coincide only with \Non letters of $W_{2a-1}[..l]$ or $W_{2a-1}[-l..]$ (the first and last $6k$ letters of $W_{2a-1}$).

For $18k \geq l > 15k$, since the subword $\Pos^{9k}$ is in $A$ and every subword $W_{2a-1}$ with length $9k$ excluding the last contains at least $k/4$ \Pos letters, there are at least $k/4$ repelled letter pairs involving letters in $A$.
Moreover, the portion of $M_{\beta, \Rev{C_b}}$ in $A$ coincides with a subword of $W_{2a-1}$, namely a subword of $\Non^{6k} \Pos^{k/4} \Non^{11k/4}$ or $\Non^{11k/4+1} \Pos^{k/4-1} \Non^{6k+1}$, that contains at most $k/4$ \Pos letters.
So it must be that $f(W_{2a-1}[..l], A[-l..]), f(W_{2a-1}[-l..], A[..l]) \leq -k/4 + k/4 = 0$.

For $l > 18k$, $A$ contains non-overlapping subwords $\Pos^{9k}$ and $\Pos^{18k-9k-(2k+3)} = \Pos^{7k-3}$, a superword of $\Pos^{6k + k/4}$. 
Then since every subword of $W_{2a-1}$ with length $6k + k/4$ contains $k/4$ \Pos letters, there are at least $2(k/4)=k/2$ repelling letter pairs involving letters in $A$.
Moreover, $W_{2a-1}$ and $A$ contain a total of at most $k/2$ \Neg letters.
Thus $f(W_{2a-1}[..l], A[-l..]), f(W_{2a-1}[-l..], A[..l]) \leq -k/2 + k/2 = 0$.

\paragraph{Force}
Clearly $f(W_i, \Rev{W_j}) \leq 0$ for every $i$, $j$ with the same parity. 
So consider $f(W_{2a-1}, \Rev{W_{2b}})$ for $a, b \in \mathbb{N}_{\binom{k}{k/2}}$.
This force is equal to $k/2-1 + f(M_{\alpha, C_a}, \Rev{M_{\beta, \Rev{C_b}}})$.
Inspecting $\alpha, \beta$, it is seen that $f(M_{\alpha, C_a}, \Rev{M_{\beta, \Rev{C_b}}}) = k/2$ provided $a = b$, and at most $k/2-2$ otherwise.
Thus $f(W_{2a-1}, \Rev{W_{2b}}) = 1$ provided $a = b$ and non-positive otherwise. 
\end{proof}

Applying algebra and a similar transformation as done for Corollary~\ref{cor:encode-warmup} gives results for both signed and unsigned $k$-glue functions in terms of $k$:

\begin{corollary}
\label{cor:encode-hard}
For any $k$, there exists a length-$O(\log{k})$ dipole code that attractively encodes the canonical unsigned $k$-glue or signed $2k$-glue function.
\end{corollary}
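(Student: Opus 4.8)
The plan is to instantiate the preceding theorem with a parameter~$n$ chosen just large enough, discard the words we do not need, and — for the unsigned case — recycle the pair-concatenation trick that produced Corollary~\ref{cor:encode-warmup}. First, for the canonical signed $2k$-glue function: by Stirling's approximation $\binom{n}{n/2} = \Theta(2^n/\sqrt{n})$, so there is an absolute constant $c$ with $\binom{n}{n/2} \geq k$ for every $n \geq c\log k$. I would take $n$ to be the least multiple of~4 that is at least $c\log k$; rounding up to a multiple of~4 increases $n$ by at most~3, so still $n = O(\log k)$. Applying the preceding theorem with this $n$ yields a dipole code $\mathcal{W} = W_1, W_2, \dots, W_{2\binom{n}{n/2}}$ of length $20n+3 = O(\log k)$ that attractively encodes the canonical signed $2\binom{n}{n/2}$-glue function.

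Next I would restrict to the sub-collection $\mathcal{W}' = W_1, W_2, \dots, W_{2k}$, which is well defined since $2\binom{n}{n/2} \geq 2k$. Both conditions defining an attractive encoding — that each pair $W_i, W_j$ is aligned, and that $f(W_i, \Rev{W_j}) > 0$ exactly when $g(i,j) > 0$ — are properties of individual \emph{pairs} of words, hence are inherited by every sub-collection. Since the restriction of the canonical signed $2\binom{n}{n/2}$-glue function to the index set $\mathbb{N}_{2k}$ is precisely the canonical signed $2k$-glue function, $\mathcal{W}'$ attractively encodes it, and its length is $O(\log k)$, as required.

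For the canonical unsigned $k$-glue function I would mimic exactly the transformation following Theorem~\ref{thm:encode-warmup}: concatenate each complementary pair of $\mathcal{W}'$ into a single word $W_{2a-1}\,\Non^{20n+3}\,W_{2a}$ for $a \in \mathbb{N}_k$, giving $k$ words of length $3(20n+3) = O(\log k)$. Writing $L = 20n+3$, the separating block $\Non^{L}$ has length equal to each active half, so under any shift or reflection of one such word against another, at most one active half can meet at most one active half; alignment therefore reduces to the alignment already established inside $\mathcal{W}'$. For the aligned comparison one computes
$$f\!\left(W_{2a-1}\Non^{L}W_{2a},\ \Rev{W_{2b-1}\Non^{L}W_{2b}}\right) = f(W_{2a-1}, \Rev{W_{2b}}) + f(W_{2a}, \Rev{W_{2b-1}}),$$
and since $\mathcal{W}'$ attractively encodes the canonical signed glue function, each summand is positive when $a = b$ and non-positive otherwise; hence the total is positive exactly when $a = b$. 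That is the canonical unsigned $k$-glue function. Together with algebra to pass between $\binom{n}{n/2}$ and $k$, this gives the claimed $O(\log k)$-length codes in both cases.

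The only step that is not pure bookkeeping is the alignment claim for the concatenated words, and this is the part I expect to need the most care; however it is handled exactly as in Section~\ref{sec:encode-warmup}, the $\Non$-buffer of length~$L$ guaranteeing that no misalignment can bring the two active halves of one word simultaneously opposite both active halves of the other, so every misaligned overlap is dominated by a single already-analyzed pairwise comparison (or by an all-$\Non$ region of zero force). Everything else — the counting estimate, the restriction to a sub-collection, and the force identity above — is routine.
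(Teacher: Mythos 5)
Your proposal is correct and follows essentially the same route the paper takes, which it compresses into the single remark ``applying algebra and a similar transformation as done for Corollary~\ref{cor:encode-warmup}'': choose the theorem's parameter $n = \Theta(\log k)$ via $\binom{n}{n/2} = \Theta(2^n/\sqrt{n})$, restrict to a sub-collection (a pairwise property, so it is inherited), and concatenate complementary pairs across a $\Non$-buffer for the unsigned case. The only nit is that a misaligned overlap of two concatenated words can involve \emph{two} active-half/active-half partial overlaps simultaneously (e.g.\ a suffix of $C$ against $A$ and a suffix of $D$ against a prefix of $B$), not just one, but each summand is non-positive by the alignment already established in $\mathcal{W}'$, so your conclusion stands.
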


\section{Conclusion}
\label{sec:conclusion}

All of the encodings here are attractive encodings, and thus the word-word forces do not precisely match the outputs of the encoded glue functions.
Removing the ``attractive'' modifier is of primary importance, starting with canonical glue functions. 

\begin{openproblem}
\label{op:encode}
Can dipole codes encode canonical glue functions? Glue functions with no negative outputs? All glue functions?
\end{openproblem}

The existence of dipole codes that encode the glue functions 
\begin{displaymath}
g(i, j) = \left\{
\begin{array}{ll}
-1 & : i = j = 1\\
1 & : i = j > 1\\
0 & : {\rm otherwise} 
\end{array}
\right.
\end{displaymath}
imply that systems of square components using magnet sequences embedded along their edges are capable of universal computation, following from results of Patitz, Schweller, and Summers~\cite{Patitz-2011}.

Prior work by Bhalla et al.~\cite{Bhalla-2010} uses geometry to prevent unaligned bonding.
Such a restriction eases the difficulty of designing dipole codes, and yields an easier version of Open Problem~\ref{op:encode}.

\begin{openproblem}
Can dipole codes encode all glue functions if words are not required to be aligned?
\end{openproblem}

Forthcoming work by the authors demonstrates that dipole codes can be physically implemented.
The implementation consists of centimeter-sized, 3D-printed components orbitally stirred and bond with one another via sequences of magnets embedded into their faces.
The components have dimensions $19~{\rm mm} \times 19~{\rm mm} \times 5~{\rm mm}$ (see Figure~\ref{fig:physical}).
Neodymium disc magnets (N45, $1~{\rm mm}$ diameter, $1~{\rm mm}$ length) are embedded into 9~cylindrical recesses ($1~{\rm mm}$ diameter, $2~{\rm mm}$ length) equally spaced along $19~{\rm mm} \times 5~{\rm mm}$ faces.

\begin{figure}[h!]
  \centering
  \setlength\fboxsep{0pt}
  \setlength\fboxrule{.5pt}
  \subfloat[]{
    \raisebox{10mm}{\includegraphics[height=35mm]{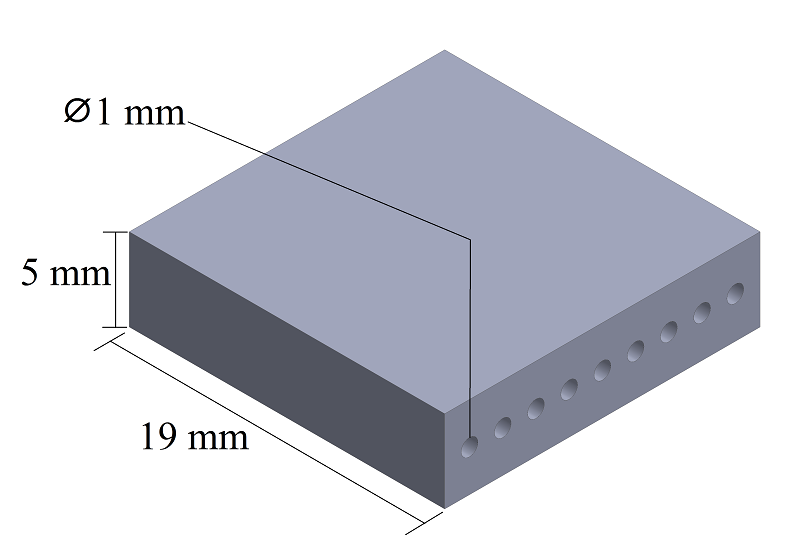}}
    \label{fig:cad_drawing}}~
  \subfloat[]{
    \includegraphics[height=50mm]{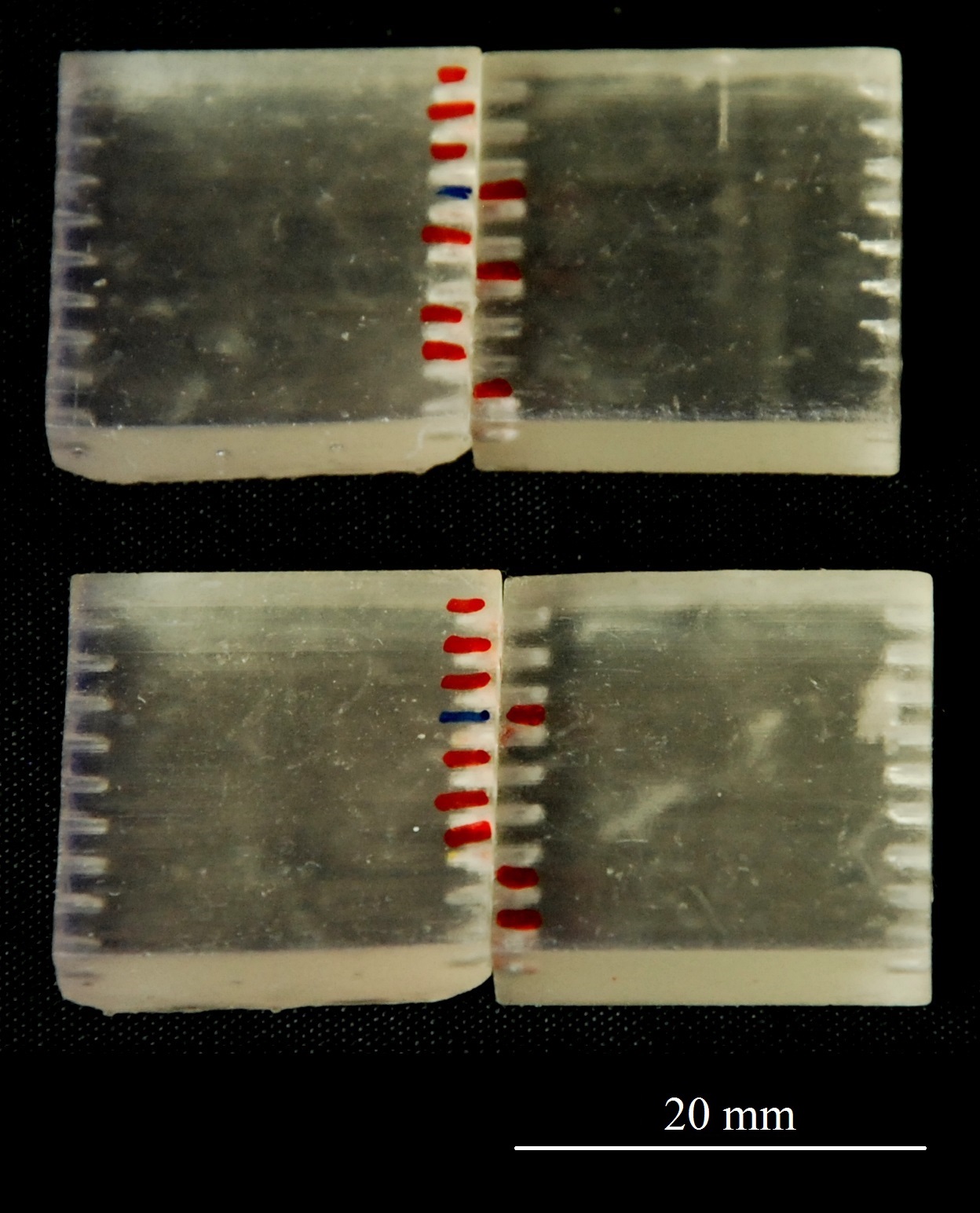}
    \label{fig:component}}
  \caption{(a) Component dimensions. (b) Component pairs bonding along faces containing complementary sequences of embedded magnets. Blue and red marks indicate magnets embedded with north and south poles oriented outwards, respectively.}
  \label{fig:physical}
\end{figure}

The recess and magnet lengths leave an \emph{air gap} between the magnet and face.
The air gap and recess spacing enforce that magnets interact pairwise as in the model: magnets embedded in the same face are sufficiently far apart that no magnet can strongly interact with more than one magnet in the same face. 
Magnet embedding sequences are determined by dipole words: a \Pos letter is a magnet with positive pole oriented outwards, a \Neg letter is a magnet negative pole oriented outwards, and a \Non letter is no magnet.
The implementation uses a length-9 code that encodes the canonical signed 4-glue function: $\Pos \Pos \Pos \Neg \Pos \Non \Pos \Pos \Non$, $\Pos \Non \Non \Pos \Non \Pos \Non \Non \Non$, $\Pos \Pos \Pos \Neg \Pos \Pos \Pos \Non \Non$, $\Pos \Pos \Non \Non \Non \Pos \Non \Non \Non$.
Preliminary experiments demonstrate that components reliably bond irreversibly via aligned complementary dipole words, while misaligned or non-complentary bonds rarely form and are easily broken by the kinetic energy imparted by the orbital stirring.

While the codes developed in this work are asymptotically optimal with small constants, scaling physical implementations to the sizes needed to use even small instances of these codes is challenging due to increased component weight, agitation forces, and assembly time.
Thus computing codes as close to optimal as possible is key to achieving physical implementations.

\begin{openproblem}
Can canonical signed $2k$-glue functions be attractively encoded by dipole codes of length $\lceil \log_3(2k) \rceil + o(\log{k})$? 
\end{openproblem}

\begin{openproblem}
What is the computational complexity of the following problem: given a glue function $g$ and integer $l$, does there exist a dipole code of length at most $l$ (attractively) encoding $g$?
\end{openproblem}

\newpage

\bibliographystyle{plain}
\bibliography{dipole}

\end{document}